\theoremstyle{definition}
\newtheorem{theorem}{Theorem}[section]
\newtheorem{proposition}[theorem]{Proposition}
\title{Self-similar motions and related relative equilibria in the $N$-point vortex system}
\author{Takeshi Gotoda \footnote{Graduate School of Mathematics, Nagoya University, Furocho, Chikusaku, Nagoya, Aichi, JAPAN  E-mail: gotoda@math.nagoya-u.ac.jp} }
\date{}
\begin{document}


\maketitle

\begin{abstract}
We study self-similar solutions of the point-vortex system. The explicit formula for self-similar solutions has been obtained for the three point-vortex problem and for a specific example of the four and five point-vortex problems. We see that the families consisting of these self-similar collapsing solutions are described by one-parameter families, and their collapse time and Hamiltonian are also expressed by functions of the same parameter. Then, the configurations at limit points of the parameter are in relative equilibria. For the many-vortex problem, we investigate the point-vortex system with the help of numerical computations. In particular, considering the case that $N - 1$ point vortices have a uniform vortex strength, we show that families of self-similar collapsing solutions continuously depend on the Hamiltonian and the self-similar solutions asymptotically approach relative equilibria as the Hamiltonian gets close to certain values. In addition, we prove the existence of relative equilibria for the four point-vortex system. We also investigate an example of seven point vortices with non-uniform vortex strengths and give numerical results for it.

\end{abstract}

\section{Introduction}

The motion of point vortices on a plane has been investigated by many researchers for a long time. The dynamics of point vortices is formulated as a Hamiltonian system \cite{Kirchhoff}: let $(x_m(t), y_m(t))$ for $m = 1,\cdots,N$ be the positions of point vortices and $\Gamma_m$ be their strengths. Then, the motion of $(x_m(t), y_m(t))$ is described by
\begin{equation*}
\Gamma_m\dfrac{\mbox{d} x_m}{\mbox{d} t} = \dfrac{\partial \mathscr{H}}{\partial y_m}, \qquad \Gamma_m\dfrac{\mbox{d} y_m}{\mbox{d} t} = -\dfrac{\partial \mathscr{H}}{\partial x_m} \label{pv-h}
\end{equation*}
with the Hamiltonian,
\begin{equation}
\mathscr{H} = - \frac{1}{2 \pi} \sum_{m=1}^N \sum_{n=m+1}^N \Gamma_m \Gamma_n \log{l_{mn}}, \label{Hamiltonian}
\end{equation}
where $l_{mn}(t) \equiv |z_m(t) - z_n(t)|$. This system is called {\it the point-vortex (PV) system}. The PV system is formally derived from the 2D Euler equations and their solutions are not equivalent to each other in general. Indeed, the derivation of the PV system is based on the Lagrangian flow map with the velocity field determined by the Biot-Savart formula, in which the velocity induced by point vortices is formally calculated. On the other hand, point vortices can approximate solutions of the 2D Euler equations. The convergence of the point-vortex approximation has been shown for smooth solutions of the 2D Euler equations in \cite{Beale,Goodman,Hald}, and weak solutions such as vortex sheets in \cite{Liu, Schochet}. Hence, analysis of the dynamics of point vortices could lead to the understanding of vortex dynamics in inviscid flows. One of the notable features of the PV system is the existence of self-similar collapsing solutions, that is, point vortices simultaneously collide with each other. The mechanism of vortex collapse plays an important role to understand fluid phenomena. For example, it has been pointed out that the vortex collapse is related to the loss of the uniqueness of solutions to the Euler equations \cite{Novikov}. The dynamics of point vortices is also utilized as a simple model of the 2D turbulence and the collapse of point vortices is considered as an elementary process in the 2D turbulence kinetics \cite{Benzi,Carnevale,Leoncini,Novikov(a),Siggia,Weiss}. Moreover, the preceding results \cite{G.,G.1} have indicated that self-similar collapse of three point vortices causes an anomalous enstrophy dissipation, which is a remarkable feature appearing in 2D turbulent flows. Although, the explicit expression of self-similar collapsing solutions have been obtained for the three PV problem in \cite{Aref,Kimura,Newton} and for an example of the four or five PV problem in \cite{Novikov}, it is still unclear what kind of configurations of point vortices lead to self-similar collapse in general. One of the main purposes of the present study is to clarify configurations of self-similar collapsing solutions and investigate the structure of the families consisting of those configurations. Another interest is to make configurations in equilibria clear. The PV system has two types of equilibria: fixed equilibria and relative equilibria. These equilibria and self-similar collapsing or rigidly translating are called {\it stationary} \cite{Oneil-1}. In this study, we focus on relative equilibria, which are rigidly rotating configurations, and try to find them as continuous limits of self-similar collapsing configurations by taking the collapse time infinity.

This paper is organized as follows. In Section~\ref{sec:PV-system}, we see some properties of the PV system and introduce the definition of self-similar motions. After showing an important proposition about the formula for self-similar solutions, we briefly review the numerical method proposed in \cite{Kudela-1, Kudela-2} to find self-similar collapsing solutions. In Section~\ref{sec:exact-solution}, we study exact solutions of self-similar collapse for $N=3$ and $N=4, 5$, which are discussed in Section~\ref{sec:three-pv} and \ref{sec:Novikov}, respectively. Through the investigation for these examples, we see that the collapse time continuously depends on the Hamiltonian and relative equilibria appears in limits of the value of the Hamiltonian. Then, in Section~\ref{sec:uniform-strength}, we study self-similar motions for the case of $N \geq 4$ under a uniform condition for vortex strengths. By numerical computations, we find families of self-similar collapsing solutions and show that symmetrical configurations in relative equilibria appear in limit states of these solutions. In particular, we give explicit configurations in relative equilibria for $N=4$ with a mathematical rigor. We also investigate an example of non-uniform vortex strengths for $N = 7$ in Section~\ref{sec:general-strength}. Section~\ref{sec:concluding} is devoted to concluding remarks.

\section{The $N$-point vortex system and self-similar solutions}
\label{sec:PV-system}
For convenience of notation, we introduce complex positions of point vortices, $z_m(t) \equiv x_m(t) + i y_m(t)$. The PV system is written by
\begin{equation}
\dfrac{\mbox{d} z_m}{\mbox{d}t} = \frac{-1}{2 \pi i} \sum_{n\neq m}^N \dfrac{\Gamma_n}{\overline{z}_m - \overline{z}_n}, \qquad z_m(0) = k_m \label{pv}
\end{equation}
for $m = 1,\cdots, N$, where $\Gamma_m$ and $k_m$ denote the strength and the initial position of $m$-th point vortex respectively, and $\overline{z}_m$ is the complex conjugate of $z_m$. The PV system has the following invariant quantities $(P, Q, I)$:
\begin{align*}
P + i Q \equiv \sum_{m=1}^N \Gamma_m x_m + i \sum_{m=1}^N \Gamma_m y_m, \quad I \equiv \sum_{m=1}^N \Gamma_m |z_m|^2 = \sum_{m=1}^N \Gamma_m (x_m^2 + y_m^2).
\end{align*}
These quantities yield another invariant $M$ depending on mutual distances of point vortices,
\begin{equation*}
M \equiv \sum_{m=1}^N \sum_{n=m+1}^N l_{mn}^2 = 2(\Gamma I - P^2 - Q^2),
\end{equation*}
where $\Gamma \equiv \sum_{m=1}^N \Gamma_m$. Considering these invariants, we find that the PV system \eqref{pv} with $N \leq 3$ is integrable for any $\Gamma_m \in \mathbb{R}$ and the system with $N = 4$ is integrable only when $\Gamma = 0$ holds, see \cite{Newton} for details. It is useful to consider the evolution of mutual distances $l_{mn}$ for $N \geq 3$, which is governed by
\begin{equation}
\frac{\mbox{d}}{\mbox{d} t} l_{mn}^2 = \frac{2}{\pi} \sum_{l \neq n \neq m}^N \Gamma_l \sigma_{mnl} A_{mnl} \left( \frac{1}{l_{nl}^2} - \frac{1}{l_{lm}^2} \right).  \label{lpv}
\end{equation}
Here, $A_{mnl}$ denotes the area of the triangle formed by $(z_m, z_n, z_l)$, which is expressed by
\begin{equation*}
A_{mnl} = \frac{1}{4} \left[ 2 \left( l_{mn}^2 l_{nl}^2 + l_{nl}^2 l_{lm}^2 + l_{lm}^2 l_{mn}^2 \right) - l_{mn}^4 - l_{nl}^4 - l_{lm}^4 \right]^{\frac{1}{2}},
\end{equation*}
and $\sigma_{mnl}$ denotes the sign of the area, that is, $\sigma_{mnl} = 1$ if the indices $(m, n, l)$ at the vertices of the triangle appear counterclockwise and $\sigma_{mnl} = -1$ if they do clockwise.

In this paper, we focus on self-similar motions of point vortices and thus assume that the complex position $z_m$ is expressed by
\begin{equation}
z_m = k_m f(t), \qquad f(t) \equiv r(t) e^{i \theta(t)}, \label{hyp-self-similar}
\end{equation}
where $r \geq 0$ and $\theta \in \mathbb{R}$ satisfy $r(0) = 1$ and $\theta(0) = 0$. Substituting \eqref{hyp-self-similar} into \eqref{pv}, we find
\begin{equation*}
2 \pi \dfrac{\mbox{d}f}{\mbox{d}t} \overline{f} = \frac{i}{k_m} \sum_{n\neq m}^N  \dfrac{\Gamma_n}{\overline{k}_m - \overline{k}_n}.
\end{equation*}
Since $f$ is independent of $m$, there exists a constant $C = A + i B \in \mathbb{C}$ with $A$, $B \in \mathbb{R}$, which is independent of $m$, such that
\begin{equation}
C = A + i B = \frac{i}{2 \pi k_m} \sum_{n\neq m}^N  \dfrac{\Gamma_n}{\overline{k}_m - \overline{k}_n} \label{def-AB}
\end{equation}
for any $m = 1,\cdots,N$. That is to say, the existence of self-similar solutions is equivalent to the existence of $\{ k_m \}_{m=1}^N$ for which there exists a constant $C \in \mathbb{C}$ satisfying \eqref{def-AB}. For the initial position $\{ k_m \}_{m=1}^N$ with a uniform constant $C = A + iB$, the self-similar solution of \eqref{pv} is explicitly described by
\begin{equation}
z_m(t) = k_m \sqrt{2At + 1} \exp{\left[ i \dfrac{B}{2A}\log{(2At + 1)} \right]} \label{sol_self-similar}
\end{equation}
for $m = 1,\cdots,N$ when $A \neq 0$. Then, the mutual distances are given by
\begin{equation*}
l_{mn}(t) = l_{mn}(0) \sqrt{2At + 1} \label{lsol_self-similar}
\end{equation*}
for $m \neq n$. Hence, the point vortices collide simultaneously at the origin and the collapse time is determined by
\begin{equation}
t_c = - \dfrac{1}{2A}. \label{t_collapse}
\end{equation}
For the initial position satisfying $A =0$, the corresponding self-similar solution is a relative equilibrium in the form,
\begin{equation*}
z_m(t) = k_m e^{i B t}.
\end{equation*}
If $B = 0$, the solution is a fixed equilibrium, which is a fixed point of the right-hand side in \eqref{pv}, and, in the other cases, the solution rotating with the angular velocity $B \neq 0$ is a fixed point of the right-hand side in \eqref{lpv}. Indeed, we have the following formula for the constants $A$ and $B$ in \eqref{def-AB}.

\begin{proposition}
\label{prop-AB}
For any self-similar solution \eqref{hyp-self-similar}, the constants $A$ and $B$ defined by \eqref{def-AB} are expressed by
\begin{align*}
A &= \frac{1}{\pi l_{mn}^2(0)} \sum_{l \neq n \neq m}^N \Gamma_l \sigma_{mnl} A_{mnl} \left( \frac{1}{l_{nl}^2} - \frac{1}{l_{lm}^2} \right), \\
B &= \frac{1}{2\pi l_{mn}^2(0)} \left[ \Gamma_m + \Gamma_n + (\boldsymbol{k}_m - \boldsymbol{k}_n) \cdot \sum_{l \neq n \neq m}^N \Gamma_l \left( \frac{\boldsymbol{k}_l - \boldsymbol{k}_n}{l_{nl}^2} - \frac{\boldsymbol{k}_l - \boldsymbol{k}_m}{l_{lm}^2} \right) \right]
\end{align*}
for any $m \neq n$, where $\boldsymbol{k}_m \equiv (Re[k_m],Im[k_m])$.
\end{proposition}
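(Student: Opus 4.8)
The proof rests on the scalar relation $\overline{f}\,(\mathrm{d}f/\mathrm{d}t) = C = A+iB$ established above for any self-similar motion; this relation is independent of both $t$ and the index $m$. Since $z_m - z_n = (k_m - k_n)f(t)$ for every pair, multiplying the relation by $|k_m - k_n|^2 = l_{mn}^2(0)$ and evaluating at $t=0$ gives, for every $m \neq n$,
\begin{equation*}
l_{mn}^2(0)\,(A + iB) = (\overline{z}_m - \overline{z}_n)\,\frac{\mathrm{d}}{\mathrm{d}t}(z_m - z_n)\,\Big|_{\,t=0}.
\end{equation*}
Hence it suffices to compute the right-hand side from the equations of motion \eqref{pv} at $t=0$ and to separate its real and imaginary parts.

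Using \eqref{pv} and splitting off the mutual $m\leftrightarrow n$ interaction from the two sums (so that the $l=n$ term in the equation for $z_m$ and the $l=m$ term in the equation for $z_n$ combine into one term proportional to $\Gamma_m+\Gamma_n$), the right-hand side becomes
\begin{equation*}
\frac{i}{2\pi}\left[\Gamma_m + \Gamma_n + \sum_{l \neq n \neq m}^N \Gamma_l\,(\overline{k}_m - \overline{k}_n)\left(\frac{1}{\overline{k}_m - \overline{k}_l} - \frac{1}{\overline{k}_n - \overline{k}_l}\right)\right],
\end{equation*}
where I used $-1/i = i$. Then I would rewrite each $1/(\overline{k}_p - \overline{k}_q)$ as $(k_p - k_q)/l_{pq}^2(0)$ and take real and imaginary parts, recalling $\mathrm{Re}[iW]=-\mathrm{Im}\,W$ and $\mathrm{Im}[iW]=\mathrm{Re}\,W$. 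Since $\Gamma_m+\Gamma_n$ is real, the real part gives the formula for $B$ once one observes that $\mathrm{Re}[(\overline{k}_m - \overline{k}_n)(k_p - k_q)]$ is the Euclidean inner product $(\boldsymbol{k}_m - \boldsymbol{k}_n)\cdot(\boldsymbol{k}_p - \boldsymbol{k}_q)$ and rearranges the differences via $\boldsymbol{k}_m - \boldsymbol{k}_l = -(\boldsymbol{k}_l - \boldsymbol{k}_m)$. The imaginary part gives the formula for $A$ once one observes that $\mathrm{Im}[(\overline{k}_m - \overline{k}_n)(k_m - k_l)]$ and $\mathrm{Im}[(\overline{k}_m - \overline{k}_n)(k_n - k_l)]$ both equal the planar cross product $(\boldsymbol{k}_m - \boldsymbol{k}_n)\times(\boldsymbol{k}_m - \boldsymbol{k}_l)$, which is twice the signed area $\sigma_{mnl}A_{mnl}$ of the triangle $(z_m,z_n,z_l)$; collecting terms reproduces exactly the right-hand side of \eqref{lpv} divided by $l_{mn}^2(0)$.

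The derivation of the $A$-formula is in fact even shorter if one invokes \eqref{lpv} directly: from $l_{mn}^2(t) = l_{mn}^2(0)\,|f(t)|^2$ and $\overline{f}\,(\mathrm{d}f/\mathrm{d}t)=A+iB$ one gets $\mathrm{d}l_{mn}^2/\mathrm{d}t = 2A\,l_{mn}^2(0)$, and comparing with \eqref{lpv} at $t=0$ yields the claim immediately. In any case, the only genuine care needed is with the orientation sign: one must verify that the planar cross product $(\boldsymbol{k}_m - \boldsymbol{k}_n)\times(\boldsymbol{k}_m - \boldsymbol{k}_l)$ equals $+2\,\sigma_{mnl}A_{mnl}$ under the convention fixed above for $\sigma_{mnl}$, and not its negative; the rest is bookkeeping. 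Finally, since the constants $A$ and $B$ of \eqref{def-AB} do not involve $n$ and are independent of $m$, the identities just obtained show a posteriori that both right-hand sides are independent of the chosen pair $m\neq n$, so no separate verification of that is needed.
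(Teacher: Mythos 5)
Your proof is correct and follows essentially the same route as the paper's: difference the defining relation \eqref{def-AB} over the indices $m$ and $n$, multiply by $\overline{k}_m-\overline{k}_n$ to produce $l_{mn}^2(0)\,C$, and split into real and imaginary parts by identifying $\mathrm{Re}[(\overline{k}_m-\overline{k}_n)(k_p-k_q)]$ with a dot product and $\mathrm{Im}$ with twice the signed area $\sigma_{mnl}A_{mnl}$ (the paper packages this as the identity $(x_1+iy_1)(x_2-iy_2)=\boldsymbol{x}_1\cdot\boldsymbol{x}_2-2i\sigma_{12}A_{12}$ together with $\sigma_{mnl}=\sigma_{nlm}=-\sigma_{mln}$). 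Your side remark that the $A$-formula also drops out of \eqref{lpv} via $\mathrm{d}l_{mn}^2/\mathrm{d}t=2A\,l_{mn}^2(0)$ is a valid shortcut not used in the paper, but the main argument is the same.
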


\begin{proof}
Since it follows from \eqref{def-AB} that
\begin{align*}
- 2 \pi i (k_m - k_n) C &= \sum_{l \neq m}^N \dfrac{\Gamma_l}{\overline{k}_m - \overline{k}_l} - \sum_{l \neq n}^N \dfrac{\Gamma_l}{\overline{k}_n - \overline{k}_l} \\
& =  \dfrac{\Gamma_m + \Gamma_n}{\overline{k}_m - \overline{k}_n} + \sum_{l \neq n \neq m}^N \Gamma_l \left( \dfrac{1}{\overline{k}_l - \overline{k}_n} -  \dfrac{1}{\overline{k}_l - \overline{k}_m} \right),
\end{align*}
we find
\begin{equation*}
- 2 \pi i l_{mn}^2(0) C =  \Gamma_m + \Gamma_n + \sum_{l \neq n \neq m}^N \Gamma_l \left( \dfrac{(k_l - k_n)(\overline{k}_m - \overline{k}_n)}{l_{nl}^2(0)} +  \dfrac{(k_l - k_m)(\overline{k}_n - \overline{k}_m)}{l_{lm}^2(0)} \right).
\end{equation*}
Note that, for any $\boldsymbol{x}_1 = (x_1 ,y_1)$ and $\boldsymbol{x}_2 = (x_2 ,y_2)$, we have
\begin{equation*}
(x_1 + i y_1)(x_2 - i y_2) = \boldsymbol{x}_1 \cdot \boldsymbol{x}_2 - 2 i \sigma_{12} A_{12},
\end{equation*}
where $A_{12}$ is the area of the triangle formed by the origin, $\boldsymbol{x}_1$ and $\boldsymbol{x}_2$, and $\sigma_{12}$ is the sign of the area defined in the same way as $\sigma_{mnl}$ in \eqref{lpv}. Hence, we obtain
\begin{align*}
2 \pi l_{mn}^2(0) C &= 2 \sum_{l \neq n \neq m}^N \Gamma_l \left( \dfrac{\sigma_{nlm} A_{nlm}}{l_{nl}^2(0)} + \dfrac{\sigma_{mln} A_{mln}}{l_{lm}^2(0)} \right) \\
& + i \left[ \Gamma_m + \Gamma_n + (\boldsymbol{k}_m - \boldsymbol{k}_n) \cdot \sum_{l \neq n \neq m}^N \Gamma_l \left( \dfrac{\boldsymbol{k}_l - \boldsymbol{k}_n}{l_{nl}^2(0)} - \dfrac{\boldsymbol{k}_l - \boldsymbol{k}_m}{l_{lm}^2(0)} \right) \right].
\end{align*}
Considering $\sigma_{mnl} = \sigma_{nlm} = -\sigma_{mln}$, we have the desired result.
\end{proof}

For any self-similar collapsing motion, its reflection over the $x$-axis or the $y$-axis leads to a self-similar expanding motion. Indeed, Proposition~\ref{prop-AB} implies that the signs of all $\sigma_{mnl}$ get the opposite signs by the reflection and thus the sign of $A$ changes, while the constant $B$ and the absolute value of $A$ are invariant under reflections. In addition, the change of the signs for all $\sigma_{mnl}$ is equivalent to the case that the signs of all vortex strengths get reversed, for which the orbits of point vortices coincide with those before the reflection for the negative time direction \cite{Synge}. We remark that, for the self-similar solutions \eqref{hyp-self-similar}, the invariant quantities need to satisfy $P = Q = I = M = 0$, and the invariance of the Hamiltonian gives
\begin{equation}
\Gamma_H \equiv \sum_{m=1}^N \sum_{n=m+1}^N \Gamma_m \Gamma_n = 0. \label{condi-gmn}
\end{equation}
The center of vorticity $(P + i Q)/\Gamma$, at which the collapse of point vortices occurs, is also invariant and located at origin for $\Gamma \neq 0$.

As we see in Section~\ref{sec:three-pv}, it is well known that the three PV system has a necessary and sufficient condition for the existence of self-similar collapsing solutions. For the case of $N \geq 4$, although an example of exact self-similar collapsing solutions has been obtained for the four and five PV system \cite{Novikov}, which is treated in Section~\ref{sec:Novikov}, it has not been made clear what kind of configurations lead to self-similar collapse. On the other hand, in the recent studies by \cite{Kudela-1,Kudela-2}, several examples of self-similar collapsing solutions have been constructed numerically for $N \geq 4$. In this paper, we use the method proposed in those papers to derive configurations leading to self-similar collapse. We now review that method briefly. Let the complex variable $v_m$ be
\begin{equation*}
v_m \equiv \frac{-1}{2 \pi i} \sum_{n\neq m}^N \dfrac{\Gamma_n}{\overline{k}_m - \overline{k}_n}.
\end{equation*}
Then, the existence of the constant $C$ satisfying \eqref{def-AB} is equivalent to the existence of $\{ k_m \}_{m=1}^N$ such that $v_m k_l = v_l k_m$ for $m \neq l = 1,\cdots N$, which is equivalent to
\begin{equation}
v_1 k_m = v_m k_1, \label{v1_km}
\end{equation}
for $m = 2, \cdots, N-2$, see \cite{Kudela-1,Kudela-2,Oneil-1}. Since self-similar solutions are invariant under self-similar transformations, we can fix the position of $N$-th point vortex. To determine the positions of the rest of $N-1$ point vortices, $2N-2$ equations for $\{ k_m \}$ are required. Thus, we use the following nonlinear equations:
\begin{align}
f_{2m-3} \equiv Re[v_1 k_m] - Re[v_m k_2] = 0, \quad f_{2m-2} \equiv Im[v_1 k_m] - Im[v_m k_2] = 0 \label{f-1}
\end{align}
for $m = 2,\cdots N-2$, and
\begin{align}
f_{2N-5} \equiv P = 0, \quad f_{2N-4} \equiv Q = 0, \quad f_{2N-3} \equiv S = 0, \quad f_{2N-2} \equiv \mathscr{H} - \mathscr{H}_0 = 0, \label{f-2}
\end{align}
where $\mathscr{H}_0$ is a given constant. As explained in \cite{Kudela-1,Kudela-2}, these equations can be solved numerically by the Newton method with a starting point determined by applying the Levenberg-Marquart algorithm to the nonlinear least squares problem $\sum_{n=1}^{2N-2}f_n^2$. From the above method, for the given constant $\mathscr{H}_0$, we can numerically obtain a configuration leading to a self-similar motion if a self-similar solution whose Hamiltonian is $\mathscr{H}_0$ exists.

\section{Exact solutions for self-similar collapse}
\label{sec:exact-solution}
\subsection{Three point-vortex problem}
\label{sec:three-pv}

In this section, we see the properties of self-similar collapsing solutions of the three PV problem. It is well known that the conditions,
\begin{equation}
\Gamma_H = \Gamma_1 \Gamma_2 + \Gamma_2 \Gamma_3 + \Gamma_3 \Gamma_1 = 0,  \label{three-g-condi}
\end{equation}
\begin{equation}
M = \Gamma_1 \Gamma_2 l_{12}^2 + \Gamma_2 \Gamma_3 l_{23}^2 + \Gamma_3 \Gamma_1 l_{31}^2 = 0, \label{M_pv}
\end{equation}
are a necessary and sufficient condition for self-similar collapse of three point vortices \cite{Aref,Kimura,Newton}, and partial collapse does not occur. Note that (\ref{three-g-condi}) allows us to assume $\Gamma_1 \geq \Gamma_2 > 0 > \Gamma_3$ without loss of generality and replace $\Gamma_3$ by $ - \Gamma_1 \Gamma_2 /(\Gamma_1 + \Gamma_2)$. Under the conditions \eqref{three-g-condi} and \eqref{M_pv}, we have a self-similar solution expressed by \eqref{sol_self-similar} with
\begin{align}
A &= \dfrac{\Gamma_3}{4 \pi l_{12}^2(0)} \sigma_{123} \left[ 2(\lambda_1 + \lambda_2) - (\lambda_1 - \lambda_2)^2 -1 \right]^{1/2} \dfrac{\lambda_2 - \lambda_1}{\lambda_1\lambda_2}, \label{three-A} \\
B &= \dfrac{1}{4 \pi l_{12}^2(0)} \dfrac{\lambda_1\lambda_2(\Gamma_1 + \Gamma_2)^3 + (\Gamma_1^2 + \Gamma_2^2)(\Gamma_2 \lambda_1 + \Gamma_1 \lambda_2)}{\lambda_1\lambda_2(\Gamma_1 + \Gamma_2)^2}, \nonumber
\end{align}
where $\lambda_1 \equiv l_{23}^2 / l_{12}^2$ and $\lambda_2 \equiv l_{31}^2 / l_{12}^2$ are invariant constants \cite{Newton}. The constant $A$ gives the collapse time $t_c$ by the formula \eqref{t_collapse}. All possible equilibria satisfying \eqref{three-g-condi} and \eqref{M_pv} are collinear states or equilateral triangles: they both form relative equilibria rotating rigidly about their center of vorticity and, in particular, the frequency of equilateral triangle is given by $\Gamma / (2 \pi l_{mn}^2)$, see \cite{Newton} for details. According to \cite{G.,Novikov}, the Hamiltonian $\mathscr{H}$ of the self-similar collapsing solution satisfies
\begin{equation}
\min{\{\mathscr{H}_+, \mathscr{H}_-\}} < \mathscr{H} < 0, \label{Range-H}
\end{equation}
where
\begin{equation*}
\mathscr{H}_\pm \equiv \frac{\Gamma_1^2 \Gamma_2^2}{4\pi(\Gamma_1 + \Gamma_2)} \log\left( \psi\left( \frac{\Gamma_1}{\Gamma_2}k_{\pm} \right) \left[\psi\left( \frac{\Gamma_1}{\Gamma_2} \right)\right]^{-1}  \right)
\end{equation*}
and $\psi(r)$ and $k_\pm$ are given by
\begin{equation*}
\psi(r) \equiv \left( \frac{1}{1+r} \right)^{1/\Gamma_1}  \left( \frac{r}{1+r} \right)^{1/\Gamma_2}, \quad k_\pm \equiv \left( \frac{\Gamma_1 + \Gamma_2 \pm \sqrt{\Gamma_1^2 + \Gamma_1 \Gamma_2 + \Gamma_2^2}}{\Gamma_2} \right)^2.
\end{equation*}
Then, $\mathscr{H} = \mathscr{H}_\pm$ and $\mathscr{H} = 0$ correspond to collinear states and an equilateral triangle in relative equilibria, respectively.

\begin{figure}[t]
\begin{center}
\includegraphics[scale=0.65]{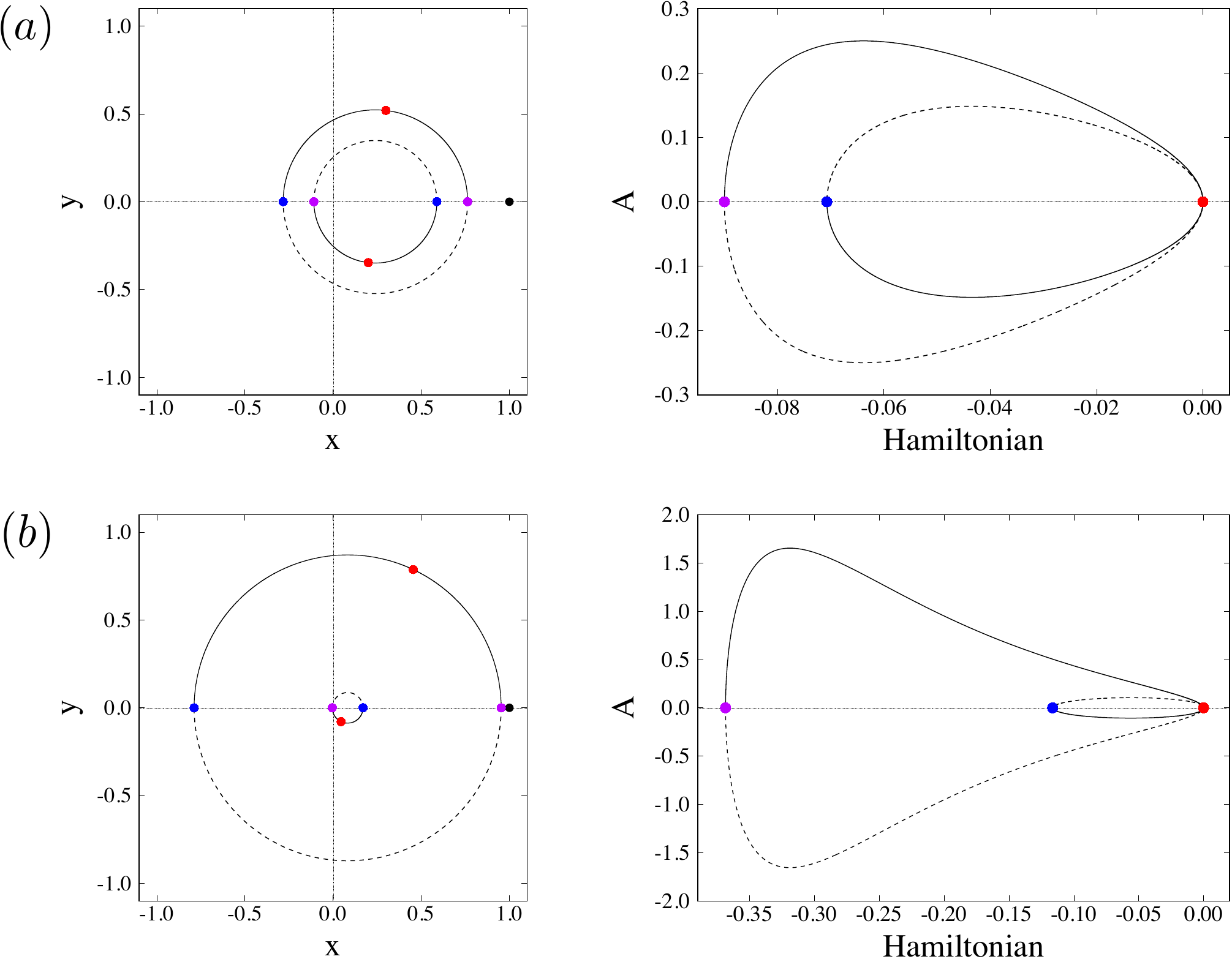}
\end{center}
\caption{The graphs of \eqref{three-init-general} and $(\mathscr{H}(\theta), A(\theta))$ by changing $0 \leq \theta < 2 \pi$ for $(a)$ $(\Gamma_1, \Gamma_2) = (3/2, 1)$ and $(b)$ $(\Gamma_1, \Gamma_2) = (10, 1)$. The solid lines denotes the graphs for $0 \leq \theta < \pi$ and the dashed ones are for $\pi \leq \theta < 2 \pi$. The black point in the left figure denotes the fixed $k_3$. The configuration formed by red points with $k_3$ is an equilateral triangle for $\theta = \theta_0$, and those by blue or purple points with $k_3$ are collinear states for $\theta = 0$ or $\pi$, respectively. }
\label{fig:three_general}
\end{figure}

We see the intermediate configurations between collinear states and equilateral triangles: they lead to self-similar collapse and satisfy \eqref{Range-H}. Setting $k_1 = x_1$ and $k_2 = x_2$ with $x_1 > x_2$, we find from \eqref{M_pv} that $k_3 = x + iy$ with $x$, $y\in\mathbb{R}$ satisfies
\begin{equation*}
\left(x - \dfrac{\Gamma_1 x_1 + \Gamma_2 x_2}{\Gamma_1 + \Gamma_2} \right)^2 + y^2 = \dfrac{\Gamma_1^2 + \Gamma_1 \Gamma_2 + \Gamma_2^2}{(\Gamma_1 + \Gamma_2)^2} |x_1 - x_2|^2.
\end{equation*}
Thus, we have
\begin{equation*}
x = \dfrac{\Gamma_1 x_1 + \Gamma_2 x_2}{\Gamma_1 + \Gamma_2} + \dfrac{\sqrt{\mathcal{R}}}{\Gamma_1 + \Gamma_2} l \cos{\theta}, \qquad y = \dfrac{\sqrt{\mathcal{R}}}{\Gamma_1 + \Gamma_2} l \sin{\theta}
\end{equation*}
for $\theta \in [0, 2\pi)$, where $\mathcal{R} \equiv \Gamma_1^2 + \Gamma_1 \Gamma_2 + \Gamma_2^2$ and $l \equiv |x_1 - x_2|$, see \cite{Kimura}. Then, their mutual distances are given by
\begin{equation}
l_{23} = \dfrac{l}{\Gamma_1 + \Gamma_2} \sqrt{ \Gamma_1^2 + \mathcal{R} + 2 \Gamma_1 \sqrt{\mathcal{R}} \cos{\theta} }, \quad l_{31} = \dfrac{l}{\Gamma_1 + \Gamma_2} \sqrt{\Gamma_2^2 + \mathcal{R} - 2 \Gamma_2 \sqrt{\mathcal{R}} \cos{\theta} }  \label{three-l-theta}
\end{equation}
and $l_{12} = l$. The formula \eqref{three-l-theta} implies that three point vortices form an equilateral triangle when $\theta = \theta_0$ such that $\cos{\theta_0} = - (\Gamma_1 - \Gamma_2)/(2 \sqrt{\mathcal{R}})$. To fix $k_3$ to $(1,0)$, we apply a self-similar transformation to $(k_1, k_2, k_3)$ so that
\begin{equation}
k_1 = \dfrac{\Gamma_1 \Gamma_2}{(\Gamma_1 + \Gamma_2)^2} \left( 1 + \dfrac{\sqrt{\mathcal{R}}}{\Gamma_1} e^{- i \theta} \right), \quad k_2 = \dfrac{\Gamma_1 \Gamma_2}{(\Gamma_1 + \Gamma_2)^2} \left( 1 - \dfrac{\sqrt{\mathcal{R}}}{\Gamma_2} e^{- i \theta} \right), \quad k_3 = 1. \label{three-init-general}
\end{equation}
Then, $(k_1, k_2, k_3)$ satisfy $P = Q = 0$ and their mutual distances are given by \eqref{three-l-theta} with $l_{12} = \sqrt{\mathcal{R}}/(\Gamma_1 + \Gamma_2)$. Thus, the Hamiltonian is expressed by a function of $\theta$, that is,
\begin{equation*}
\mathscr{H}(\theta) = \dfrac{\Gamma_1 \Gamma_2}{4\pi (\Gamma_1 + \Gamma_2)} \log{\left( \dfrac{\Gamma_1^2 + \mathcal{R} + 2 \Gamma_1 \sqrt{\mathcal{R}} \cos{\theta}}{(\Gamma_1 + \Gamma_2)^2} \right)^{\Gamma_2} \left( \dfrac{\Gamma_2^2 + \mathcal{R} - 2 \Gamma_2 \sqrt{\mathcal{R}} \cos{\theta}}{(\Gamma_1 + \Gamma_2)^2} \right)^{\Gamma_1}},
\end{equation*}
and $\mathscr{H}(\theta)$ is invariant under self-similar transformations. Note that $A$ in \eqref{three-A} is also considered as a function of $\theta$ by using \eqref{three-l-theta} and we describe it by $A(\theta)$. Figure~\ref{fig:three_general} shows the sets of positions of three point vortices \eqref{three-init-general} and the dependence between $A(\theta)$ and $\mathscr{H}(\theta)$, which are obtained by changing $\theta \in [0,2\pi)$. Since the graphs for $\theta \in (\pi, 2 \pi)$ in both figures are symmetric to those for $\theta \in (0, \pi)$ with the $x$-axis or the $\mathscr{H}$-axis, we pay attention only to $\theta \in [0, \pi]$. Note that $A = 0$ holds for two collinear states for $\theta = 0$, $\pi$, and an equilateral triangle for $\theta = \theta_0$. As we see in the right figure, these relative equilibria are continuously connected by a family of self-similar solutions satisfying $A \neq 0$: solutions for $0 < \theta < \theta_0$ are self-similar collapsing with $A < 0$, and solutions for $\theta_0 < \theta < \pi$ are self-similar expanding with $A > 0$. According to \eqref{Range-H}, the equilateral triangle for $\theta = \theta_0$ satisfies $\mathscr{H} = 0$ and the collinear states for $\theta = 0$ and $\pi$ do $\mathscr{H} = \mathscr{H}_-$ and $\mathscr{H}_+$, respectively.

\begin{figure}[t]
\begin{center}
\includegraphics[scale=0.62]{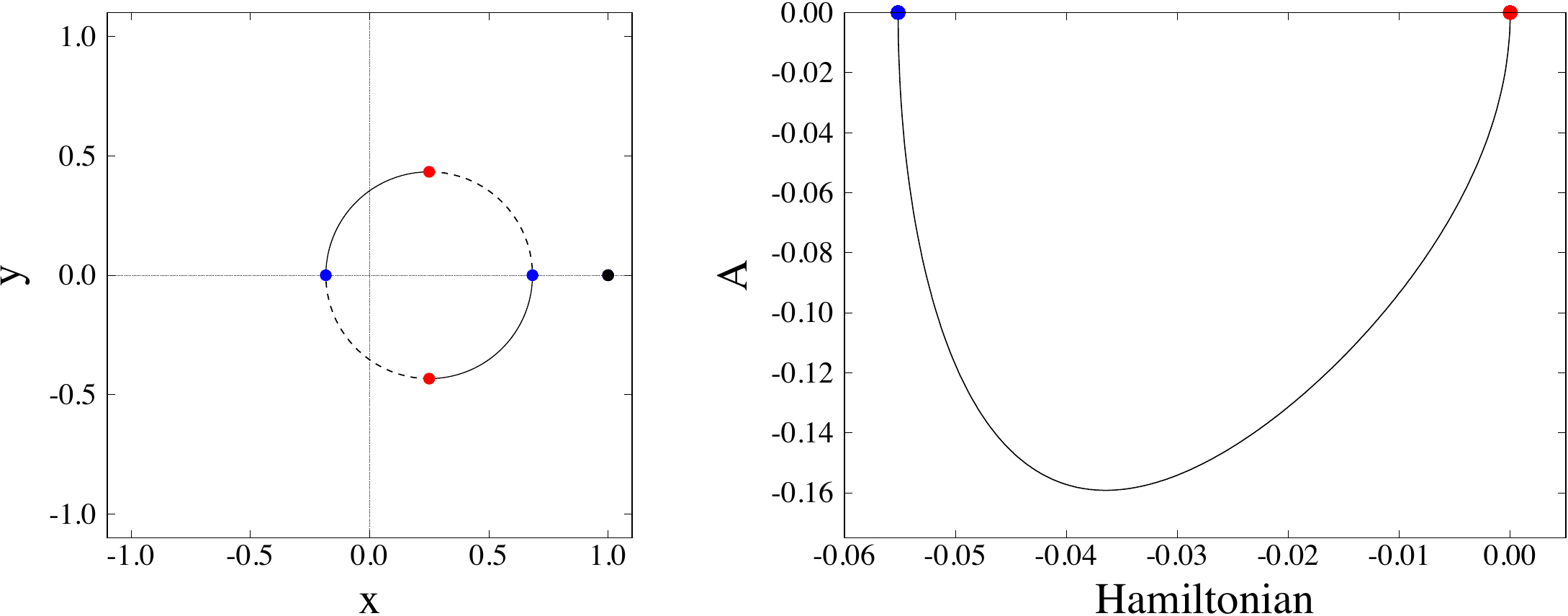}
\caption{The graphs of \eqref{ex-three} and $(\mathscr{H}(\theta), A(\theta))$ in \eqref{three-ABH} by changing $0 \leq \theta \leq \pi/2$. In the left figure, the dashed lines express the configurations for $\pi/2 < \theta < \pi$ and the black point denotes $k_3$. The configurations formed by red and blue points correspond to an equilateral triangle for $\theta = \pi /2$ and a collinear state for $\theta = 0$, respectively.}
\label{fig:three}
\end{center}
\end{figure}

For the case of $\Gamma_1 = \Gamma_2$, owing to \eqref{three-g-condi}, we may assume that $\Gamma_1 = \Gamma_2 = 1$ and $\Gamma_3 = -1/2$ without loss of generality. Then, the initial configuration \eqref{three-init-general} is written by
\begin{equation}
k_1 = \dfrac{1}{4}\left( 1 + \sqrt{3} e^{-i\theta} \right), \qquad  k_2 = \dfrac{1}{4}\left( 1 - \sqrt{3} e^{-i\theta} \right), \qquad k_3 = 1, \label{ex-three}
\end{equation}
which is a modification of the configuration given by \cite{Kimura}. For the same reason as the case of general vortex strengths, we restrict the domain of $\theta$ to $[0,\pi]$. Moreover, under the condition $\Gamma_1 = \Gamma_2$, $k_2$ can be identified with $k_1$ and configurations for $(\pi/2, \pi)$ coincide with those for $(0,\pi/2)$ by a reflection over the $x$-axis and replacing $k_1$ and $k_2$ each other. Hence, it is enough to pay attention to $\theta \in [0,\pi/2]$. Note that the equilateral triangle and the collinear state are equivalent to $\theta = \pi /2$ and $\theta = 0$, respectively. Then, the formulae of $A$, $B$ and $\mathscr{H}$ for \eqref{ex-three} are simplified as follows.
\begin{equation}
A(\theta) = \dfrac{- 2 \sin{2\theta}}{\pi(5 -3 \cos{2\theta})}, \quad B(\theta) = \dfrac{2(3 - \cos{2\theta})}{\pi(5 -3 \cos{2\theta})}, \quad \mathscr{H}(\theta) = \dfrac{1}{8 \pi} \log{\dfrac{5 -3 \cos{2\theta}}{8}} \label{three-ABH}
\end{equation}
for $0 \leq \theta \leq \pi/2$, which give $\mathscr{H}(\pi/2) = 0$ and $\mathscr{H}_\pm = \mathscr{H}(0) = - 0.0551589\cdots$. Figure~\ref{fig:three} shows the set of \eqref{ex-three} and the dependence between $A$ and $\mathscr{H}$ in \eqref{three-ABH} by changing $\theta \in [0,\pi/2]$. Note that, in the right figure, the graph corresponding to $\theta \in [\pi/2, \pi]$ is symmetric to that for $\theta \in [0 ,\pi/2]$ with the $\mathscr{H}$-axis, that is, the curve satisfying $A > 0$ for $\mathscr{H}_\pm < \mathscr{H} < 0$ and connected to $(\mathscr{H}_\pm, 0)$ and $(0, 0)$ in the $\mathscr{H}$-$A$ plane.

\subsection{Four and five point-vortex problems}
\label{sec:Novikov}

For the four and five point-vortex problem, a necessary and sufficient condition has not been established. On the other hand, an example of exact solutions for self-similar collapse has been obtained by \cite{Novikov}. We see more detail about this example. Consider the four point vortices located at the vertices of a parallelogram whose diagonals intersect at the origin. Without loss of generality, we can set
\begin{equation}
k_1 = \dfrac{1}{2} d_1 e^{i\theta}, \qquad k_2 = - \dfrac{1}{2} d_1 e^{i\theta}, \qquad k_3 = - \dfrac{1}{2} d_2, \qquad k_4 = \dfrac{1}{2} d_2, \label{ex-four}
\end{equation}
where $l_{12} = d_1$ and $l_{34}= d_2$ are the diagonals and $\theta$ is the angle between the diagonals. The strengths of point vortices are given by $\Gamma_1 = \Gamma_2 = \gamma_1$ and $\Gamma_3 = \Gamma_4 = \gamma_2$, where $\gamma_1$, $\gamma_2 \in \mathbb{R}$ are given constants. Note that since $k_2$ is identified with $k_1$, we restrict the domain of $\theta$ to $[0, \pi/2]$ with the same reason as the example of three point vortices \eqref{ex-three}. Then, we easily find that $P = Q = 0$ holds and the following conditions for self-similar motions are required.
\begin{align*}
& I = \dfrac{1}{2}\left( \gamma_1 d_1^2 + \gamma_2 d_2^2 \right) = 0,\\
& M = (\gamma_1 + \gamma_2)\left( \gamma_1 d_1^2 + \gamma_2 d_2^2 \right) = 0, \\
& \Gamma_H = \gamma_1^2 + 4 \gamma_1 \gamma_2 + \gamma_2^2 = 0.
\end{align*}
These conditions yield the relation,
\begin{equation}
\dfrac{d_1^2}{d_2^2} = - \dfrac{\gamma_2}{\gamma_1} = 2 \pm \sqrt{3}, \label{condi-four}
\end{equation}
which implies that $\gamma_1$ and $\gamma_2$ have opposite signs. For the five PV problem, the fifth point vortex is placed at the origin, i.e. $k_5 = 0$, with the strength $\Gamma_5 = \gamma_3 \in \mathbb{R}$. Similarly to the four PV problem, it is confirmed that $P = Q = 0$ and
\begin{align}
& I = \dfrac{1}{2}\left( \gamma_1 d_1^2 + \gamma_2 d_2^2 \right) = 0, \label{condi-I-five}\\
& M = \dfrac{1}{2}(2 \gamma_1 + 2 \gamma_2 + \gamma_3)\left( \gamma_1 d_1^2 + \gamma_2 d_2^2 \right) = 0,  \nonumber \\
& \Gamma_H = \gamma_1^2 + 4 \gamma_1 \gamma_2 + \gamma_2^2 + 2 \gamma_3 (\gamma_1 + \gamma_2)= 0. \label{condi-g-five}
\end{align}

\begin{figure}[t]
\begin{center}
\includegraphics[scale=0.62]{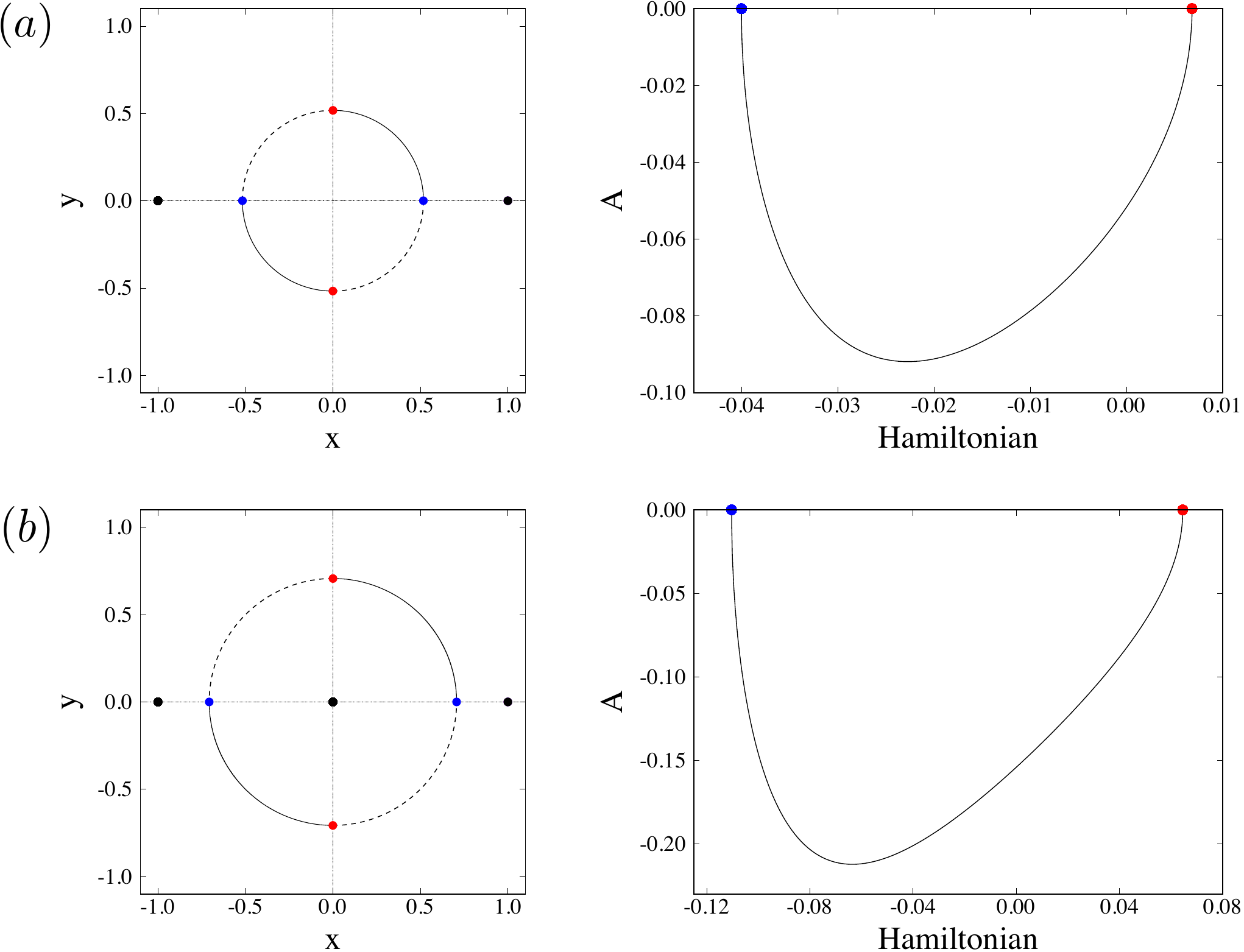}
\caption{$(a)$ The graphs of \eqref{ex-four} and $(\mathscr{H}(\theta), A(\theta))$ for $0 \leq \theta \leq \pi/2$, in which $d_2 = 2$, $\gamma_1 = -1$, $\gamma_3 =0$ and $d_1$, $\gamma_2$ are determined by \eqref{condi-four} with the ratio $2 - \sqrt{3}$. $(b)$ The graphs of \eqref{ex-four} with $k_5 = 0$ and $(\mathscr{H}(\theta), A(\theta))$ for $0 \leq \theta \leq \pi/2$, in which $d_2 = 2$, $\gamma_1 = -1$, $\gamma_2 = 1/2$ and $d_1$, $\gamma_3$ are determined by \eqref{condi-I-five} and \eqref{condi-g-five}. The black points in the left figures denote the fixed point vortices. The red and blue points correspond to $\theta = \pi /2$ and $\theta = 0$, respectively.}
\label{fig:Novikov}
\end{center}
\end{figure}

We introduce several formulae for the above example of five point vortices and their formulae are valid for the four point vortices \eqref{ex-four} by setting $\gamma_3 = 0$. According to \cite{Novikov}, for the configuration formed by \eqref{ex-four} with $k_5 = 0$, the constants $A$ and $B$ in \eqref{def-AB} are calculated by
\begin{align}
A(\theta) &= \dfrac{4 \gamma_1 d_1^2 \sin{2 \theta}}{\pi \left(d_1^4 + d_2^4 -2 d_1^2 d_2^2 \cos{2 \theta}  \right)}, \label{four-A}\\
B(\theta) &= \dfrac{d_1^2 + d_2^2}{2 \pi d_1^2 d_2^2} \left[ \gamma_1 + \gamma_2 + 2 \gamma_3 + \dfrac{4 \gamma_2 d_2^2(d_1^2 - d_2^2)}{d_1^4 + d_2^4 -2 d_1^2 d_2^2 \cos{2 \theta}} \right], \nonumber
\end{align}
which can be derived based on Proposition~\ref{prop-AB}, and the Hamiltonian $\mathscr{H}$ is expressed by
\begin{equation}
\mathscr{H}(\theta) = \dfrac{-1}{2 \pi} \log{\left[ c_\gamma d_1^{\gamma_1 (\gamma_1 + 2 \gamma_3)} d_2^{\gamma_2 (\gamma_2 + 2 \gamma_3)} \left(d_1^4 + d_2^4 - 2 d_1^2 d_2^2 \cos{2 \theta} \right)^{\gamma_1 \gamma_2} \right]}, \label{four-H}
\end{equation}
where $c_\gamma \equiv 2^{ -4 \gamma_1 \gamma_2 -2 \gamma_3 (\gamma_1 + \gamma_2)}$.

Figure~\ref{fig:Novikov}(a) shows the graphs of \eqref{ex-four} and the dependence between \eqref{four-A} and \eqref{four-H} with $\gamma_3 = 0$ by changing $\theta\in [0, \pi/2]$, in which we set $d_2 = 2$, $\gamma_1 = -1$ and $d_1$, $\gamma_2$ are determined by \eqref{condi-four} with the ratio $2 - \sqrt{3}$. The configurations for $\theta = \pi/2$ and $\theta = 0$ correspond to a diamond configuration and a collinear state in relative equilibria, respectively. Figure~\ref{fig:Novikov}(b) shows the graphs of \eqref{ex-four} with $k_5 =0$ and the dependence between \eqref{four-A} and \eqref{four-H} for $\theta\in [0, \pi/2]$, in which we set $d_2 = 2$, $\gamma_1 = -1$, $\gamma_2 = 1/2$. The constants $d_1$ and $\gamma_3$ are determined by \eqref{condi-I-five} and \eqref{condi-g-five}, respectively. The configurations for $\theta = \pi/2$ and $\theta = 0$ correspond to a diamond configuration and a collinear state with $k_5$ fixed at the origin in relative equilibria, respectively. As we see in the right figures in both examples, these relative equilibria are connected by the family of self-similar collapsing solutions. Similarly to the example of the three point vortices \eqref{ex-three}, the graph corresponding to $\theta \in [\pi/2, \pi]$ in the right figure satisfies $A > 0$ and it is symmetric to that for $\theta \in [0 ,\pi/2]$ with the $\mathscr{H}$-axis.

\section{Self-similar motions of $N$-point vortices with $N \geq 4$}
\label{sec:N-pv}
\subsection{Motions of point vortices with uniform strengths}
\label{sec:uniform-strength}

In this section, we investigate self-similar collapse of $N$-point vortices for $N \geq 4$. To compare with the three PV problem in Section~\ref{sec:three-pv}, we consider the following strengths of point vortices.
\begin{equation}
\Gamma_1 = \Gamma_2 = \cdots = \Gamma_{N-1} = 1, \qquad \Gamma_N = -\dfrac{N-2}{2}, \label{condi_strength}
\end{equation}
which satisfies the condition \eqref{condi-gmn}. Considering the invariance of self-similar motions under scaling and rotating transformations, we may fix the position of the $N$-th point vortex to $(1,0)$. Our concern is the family of configurations leading to self-similar collapse, which we call {\it the collapsing family}. More precisely, we try to obtain the family in which configurations are continuously connected by considering the value of Hamiltonian as a parameter. For the exact solutions shown in Section~\ref{sec:exact-solution}, the constants $A$ and $\mathscr{H}$ are parametrized by $\theta$ and the dependence between them is explicitly described by a curve on the $\mathscr{H}$-$A$ plane. Although, in general, it is difficult to represent $A$ and $\mathscr{H}$ by functions of one parameter, it is possible to obtain the curves describing the dependence between $A$ and $\mathscr{H}$ by numerical computation. For later use, we call those curves {\it the $\mathscr{H}$-$A$ curves}. From the investigations for the exact self-similar collapsing solutions, we expect that $\mathscr{H}$-$A$ curves are continuously connected to configurations in relative equilibria. Thus, finding collapsing families, we can make configurations in relative equilibria clear as the end points of $\mathscr{H}$-$A$ curves, which satisfy $A = 0$.

In order to carry out the above investigation, we employ the numerical method proposed in \cite{Kudela-1,Kudela-2}. According to that method, once we find a configuration leading to self-similar collapse for $\mathscr{H} = \mathscr{H}_0$, using its configuration as a new starting point, we can detect another configuration of a self-similar collapsing solution by numerically solving \eqref{f-1} and \eqref{f-2} with the perturbed values of the Hamiltonian $\mathscr{H}_0 \pm \Delta \mathscr{H}$, that is, $f_{2N-2} = \mathscr{H} - (\mathscr{H}_0 \pm \Delta \mathscr{H})$. Repeating this procedure by using the perturbed Hamiltonian and taking $|\Delta \mathscr{H}|$ small when it failed to find a configuration by solving \eqref{f-1} and \eqref{f-2}, we obtain configurations whose constant $A$ continuously depends on $\mathscr{H}$.

\begin{figure}[t]
\begin{center}
\includegraphics[scale=0.6]{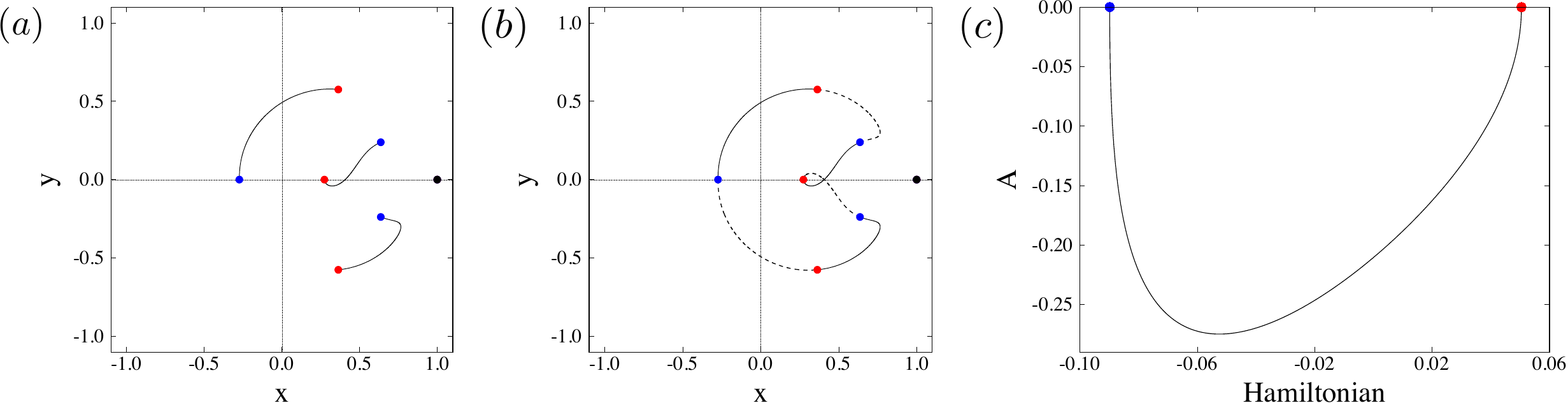}
\caption{$(a)$ The collapsing family for $N = 4$ with \eqref{condi_strength}. $(b)$ The solid curve is the same as $(a)$ and the dashed one is symmetric to $(a)$ with $x$-axis. The black points denote the fixed $k_4$. The configurations formed by red or blue points with $k_4$ are in relative equilibria. $(c)$ The $\mathscr{H}$-$A$ curve for $N = 4$. The red or blue points correspond to the relative equilibria.}
\label{fig:four}
\end{center}
\end{figure}

We first see the case of $N = 4$. Applying the numerical method described above, we obtain a collapsing family and a $\mathscr{H}$-$A$ curve, see Figure~\ref{fig:four}$(a)$ and $(c)$, respectively. Unlike the examples in Section~\ref{sec:exact-solution}, we observe that the collapsing family does not form parts of a circle and it is difficult to give an explicit formula for configurations leading to self-similar collapse. Note that the symmetric configurations to the collapsing family with the $x$-axis yield a family consisting of self-similar expanding solutions along the positive time direction, which we call {\it the expanding family}. Indeed, as we mentioned below Proposition~\ref{prop-AB}, the sign of $A$ changes by the reflection over the $x$-axis and, thus, the $\mathscr{H}$-$A$ curve for the expanding family is symmetric to the curve for the collapsing family with the $\mathscr{H}$-axis. Figure~\ref{fig:four}$(b)$ shows the collapsing and expanding families. As we see in that figure, the expanding family is connected to the collapsing family on the $x$-$y$ plane at the configurations in relative equilibria and two families form a closed curve. On the other hand, similarly to the examples in Section~\ref{sec:exact-solution}, the end points of the $\mathscr{H}$-$A$ curves correspond to relative equilibria since we have $A = 0$ at those points. Two configurations in relative equilibria seem to be symmetric with respect to the $x$-axis and this fact is shown with a mathematical rigor as follows.

\begin{proposition}
\label{four-equilibria}
For the four PV system with the condition \eqref{condi_strength}, the configurations formed by $k_1 = a$, $k_2 = b + i c$, $k_3 = b - ic$ and $k_4 = 1$, where
\begin{equation*}
a = \mp \dfrac{1}{3} \left(\sqrt{5} - \sqrt{2}\right), \quad b = \dfrac{1}{2} \pm \dfrac{1}{6} \left(\sqrt{5} - \sqrt{2}\right), \quad c = \sqrt{\dfrac{1}{6}\left(\sqrt{10} -2 \mp \left(\sqrt{5} - \sqrt{2}\right) \right)},
\end{equation*}
are in relative equilibria.
\end{proposition}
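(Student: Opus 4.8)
The plan is to check directly that the stated configuration satisfies \eqref{def-AB} with a constant $C$ that is independent of $m$ and purely imaginary; by the discussion following \eqref{sol_self-similar} this is precisely the assertion that it is a relative equilibrium, rotating rigidly about the origin with angular velocity $B=\mathrm{Im}\,C$, and one verifies at the end that $B\neq 0$ so it is not a fixed equilibrium. For $N=4$ the strengths \eqref{condi_strength} read $\Gamma_1=\Gamma_2=\Gamma_3=1$, $\Gamma_4=-1$. Writing $C_m\equiv\tfrac{i}{2\pi k_m}\sum_{n\neq m}\Gamma_n/(\overline{k}_m-\overline{k}_n)$, the goal is to solve $C_1=C_2=C_3=C_4$ together with $\mathrm{Re}\,C_1=0$.

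First I would exploit the reflection symmetry of the ansatz. With $a,b,c\in\mathbb{R}$, the configuration $(k_1,k_2,k_3,k_4)=(a,\,b+ic,\,b-ic,\,1)$ is invariant under complex conjugation composed with the transposition $2\leftrightarrow 3$, which (since $\Gamma_2=\Gamma_3$) is a symmetry of \eqref{pv}. Inserting the ansatz into the definition of $C_m$ and using that $w+\overline{w}\in\mathbb{R}$ for any $w\in\mathbb{C}$, one sees at once that $C_1$ and $C_4$ are purely imaginary and that $C_3=-\overline{C_2}$. Hence the conditions $C_1=C_2=C_3=C_4$ and $\mathrm{Re}\,C_1=0$ collapse to the three real equations
\begin{equation*}
\mathrm{Re}\,C_2=0,\qquad \mathrm{Im}\,C_2=\mathrm{Im}\,C_1,\qquad \mathrm{Im}\,C_1=\mathrm{Im}\,C_4
\end{equation*}
in the three real unknowns $a,b,c$. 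Also, multiplying \eqref{def-AB} by $\Gamma_m k_m$ and summing over $m$ gives $C\sum_m\Gamma_m k_m=0$ (the double sum of $\Gamma_m\Gamma_n/(\overline{k}_m-\overline{k}_n)$ being antisymmetric), so, as $B\neq0$, the configuration automatically satisfies $\sum_m\Gamma_m k_m=a+2b-1=0$; I would use this to eliminate $b=(1-a)/2$ from the outset.

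With $b=(1-a)/2$, each of the three equations, after clearing the denominators $|k_2-k_1|^2$, $|k_2-k_4|^2$ and $b^2+c^2$, becomes a polynomial relation between $a$ and $u\equiv c^2$. Eliminating $u$ from two of them by a resultant leaves a polynomial in $a$; its relevant root is $a=\mp\tfrac13(\sqrt5-\sqrt2)$ (equivalently $9a^4-14a^2+1=0$), and back-substitution — retaining only the branch with $u>0$, checking the third equation, and verifying $B=\mathrm{Im}\,C_1\neq0$ — produces exactly the stated $b$ and $c$. Equivalently, one may simply substitute the stated values of $a,b,c$ into the three reduced equations and verify them, which is elementary arithmetic with nested radicals in $\mathbb{Q}(\sqrt2,\sqrt5)$ (note $\sqrt{10}=\sqrt2\,\sqrt5$).

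I expect this last algebraic step — the resultant computation and the extraction of the quartic's root, or the equivalent radical verification — to be the only real obstacle, together with the bookkeeping needed to discard the spurious roots (such as $a=1$ or $c=0$) introduced by clearing denominators. Everything preceding it is forced by the reflection symmetry and the momentum constraint, and is routine.
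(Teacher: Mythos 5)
Your proposal is correct and shares the paper's computational skeleton: the same conjugation-symmetric ansatz, a reduction to three real equations in $(a,b,c)$, and the same quartic (your $9a^4-14a^2+1=0$ is exactly the paper's $36b^4-72b^3+40b^2-4b-1=0$ under the substitution $b=(1-a)/2$). The genuine difference is in which three equations you impose. The paper takes $P=0$, $I=0$ and the single condition $\mathrm{Re}\,C_2=0$ (its ``$A=0$''), exploiting that $P=Q=I=0$ are necessary for self-similar motion; two of its three equations are then plain polynomials and the elimination is light, but as written it never checks that the rotation rates $\mathrm{Im}\,C_m$ agree across the four vortices, which is part of what ``relative equilibrium'' means. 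You instead impose the full system $C_1=C_2=C_3=C_4\in i\mathbb{R}$, correctly collapse it by symmetry to $\mathrm{Re}\,C_2=0$, $\mathrm{Im}\,C_1=\mathrm{Im}\,C_2$, $\mathrm{Im}\,C_1=\mathrm{Im}\,C_4$, and recover $P=0$ from the identity $C\sum_m\Gamma_m k_m=0$; this is self-contained and logically more complete, at the price of replacing the easy constraint $I=0$ by rational equations, so your elimination is heavier and the consistency check on the third (formally redundant) equation is genuinely needed. The two routes are reconciled by the companion identity $\sum_m\Gamma_m|k_m|^2\overline{C_m}=-i\Gamma_H/(2\pi)=0$: once all $C_m$ are purely imaginary, this together with $\sum_m\Gamma_m k_m C_m=0$ forces $(\mathrm{Im}\,C_1,\mathrm{Im}\,C_2,\mathrm{Im}\,C_4)$ to be proportional to $(1,1,1)$ exactly when $P=I=0$, so the paper's conditions do imply yours. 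Either way the remaining step is routine arithmetic in $\mathbb{Q}(\sqrt2,\sqrt5)$, and your plan would close the argument.
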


\begin{proof}
We determine the constants $(a,b,c)$ so that $P = 0$, $Q = 0$, $I = 0$ and $A = 0$ are satisfied. Note that symmetric configurations about the $x$-axis satisfy $Q = 0$ and
\begin{equation*}
Re \left[ \dfrac{i}{k_1} \sum_{n \neq 1}^4 \dfrac{\Gamma_n}{\overline{k}_1 - \overline{k}_n} \right] = Re \left[ \dfrac{i}{k_4} \sum_{n\neq 4}^4 \dfrac{\Gamma_n}{\overline{k}_1 - \overline{k}_n} \right] = 0.
\end{equation*}
On the other hand, the conditions $P = 0$ and $I = 0$ yield
\begin{equation}
a + 2b - 1 = 0, \qquad a^2 + 2 b^2 + 2 c^2 - 1 = 0, \label{abc-1}
\end{equation}
respectively. Note that $0 < b < 2/3$ follows from \eqref{abc-1}. In addition, we have
\begin{align*}
& Re \left[ \dfrac{i}{k_2} \sum_{n\neq 2}^4 \dfrac{\Gamma_n}{\overline{k}_1 - \overline{k}_n} \right] = - Re \left[ \dfrac{i}{k_3} \sum_{n\neq 3}^4 \dfrac{\Gamma_n}{\overline{k}_1 - \overline{k}_n} \right] \\
& = -\dfrac{c}{|k_2|^2} \left[ \dfrac{b}{2 c^2} + \dfrac{a}{(b-a)^2 + c^2} - \dfrac{1}{(b-1)^2 + c^2} \right],
\end{align*}
and thus $A = 0$ is equivalent to
\begin{equation}
\dfrac{b}{2 c^2} + \dfrac{a}{(b-a)^2 + c^2} - \dfrac{1}{(b-1)^2 + c^2} = 0. \label{abc-2}
\end{equation}
Combining \eqref{abc-1} and \eqref{abc-2}, we find
\begin{equation*}
36 b^4 - 72 b^3 + 40 b^2 - 4 b - 1 = 0,
\end{equation*}
which is symmetric about $b = 1/2$. Solving this equation for $0 < b < 2/3$, we obtain
\begin{equation*}
b = \dfrac{1}{2} \pm \dfrac{1}{6} \left(\sqrt{5} - \sqrt{2}\right),
\end{equation*}
and the rest of constants are calculated by \eqref{abc-1}:
\begin{equation*}
a = \mp \dfrac{1}{3} \left(\sqrt{5} - \sqrt{2}\right), \qquad c = \sqrt{\dfrac{1}{6}\left(\sqrt{10} -2 \mp \left(\sqrt{5} - \sqrt{2}\right) \right)},
\end{equation*}
where the double-sign corresponds.
\end{proof}

\begin{figure}[pt]
\begin{center}
\includegraphics[scale=0.6]{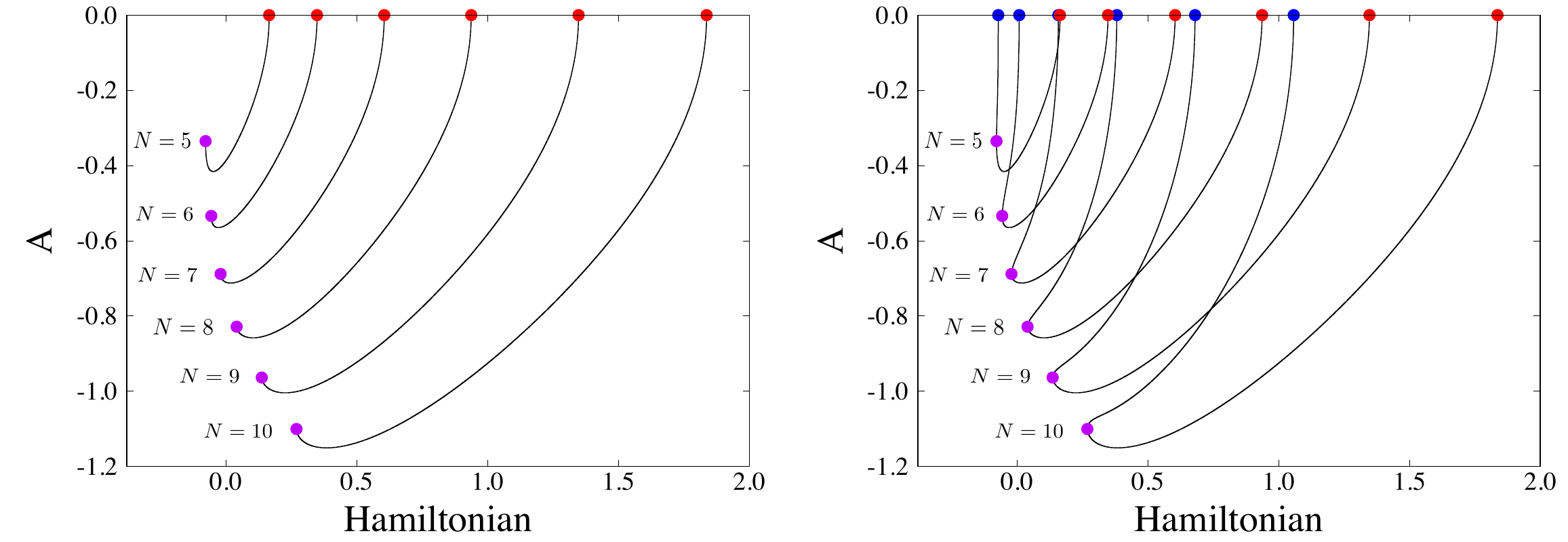}
\caption{The $\mathscr{H}$-$A$ curves for $N = 5,\cdots,10$. The red and blue points correspond to relative equilibria. The purple points correspond to $(\mathscr{H}_c, A_c)$ at which the values of Hamiltonian along the curves are minimum.}
\label{fig:H-A-fiv-ten}

\vspace{10mm}

\includegraphics[scale=0.6]{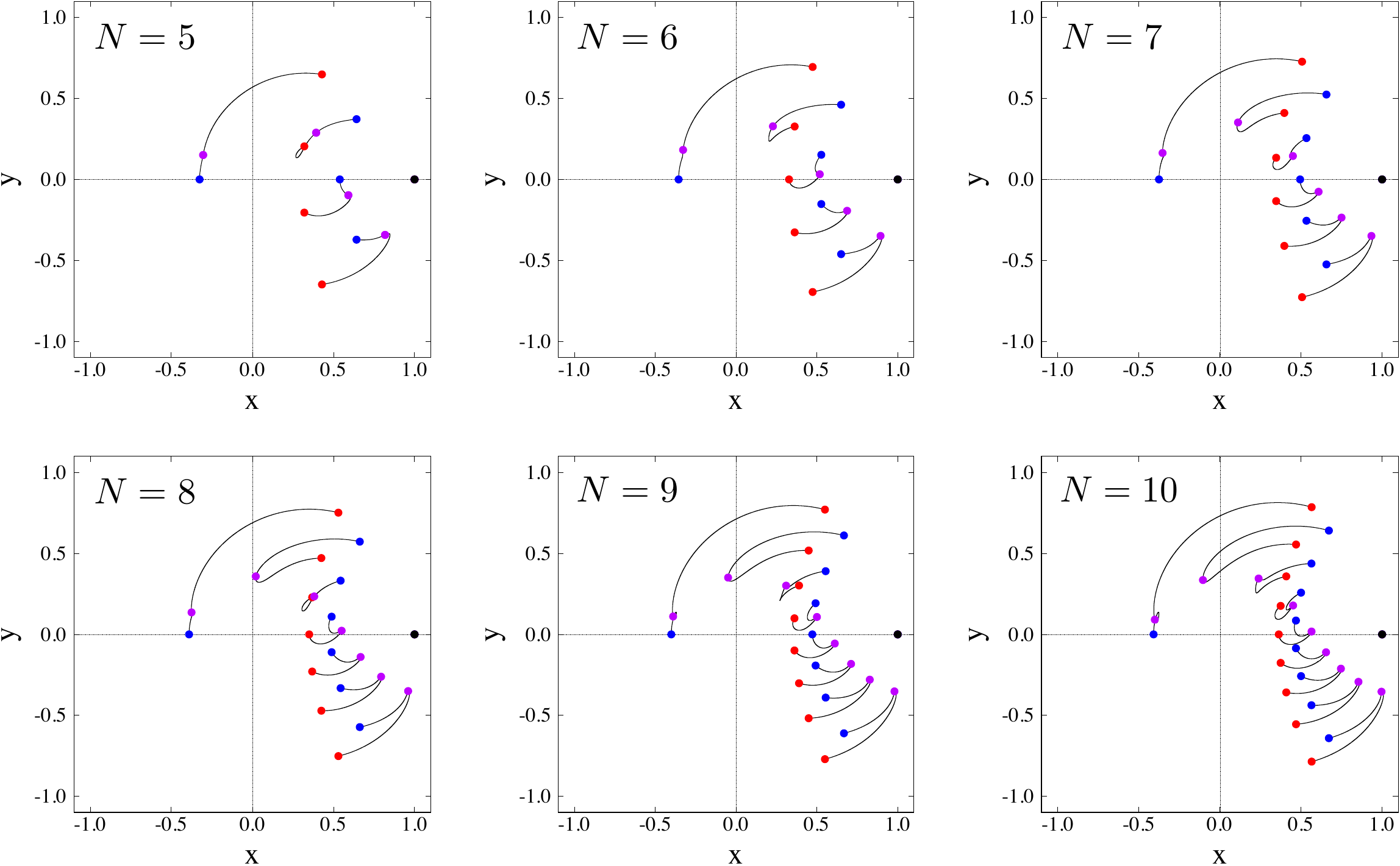}
\caption{The collapsing families for $N = 5,\cdots,10$. The black points denote the fixed $k_N$. The configurations formed by red or blue points with $k_N$ correspond to relative equilibria. The purple points correspond to configurations for $(\mathscr{H}_c, A_c)$.}
\label{fig:config-fiv-ten}
\end{center}
\end{figure}

\begin{figure}[t]
\begin{center}
\includegraphics[scale=0.6]{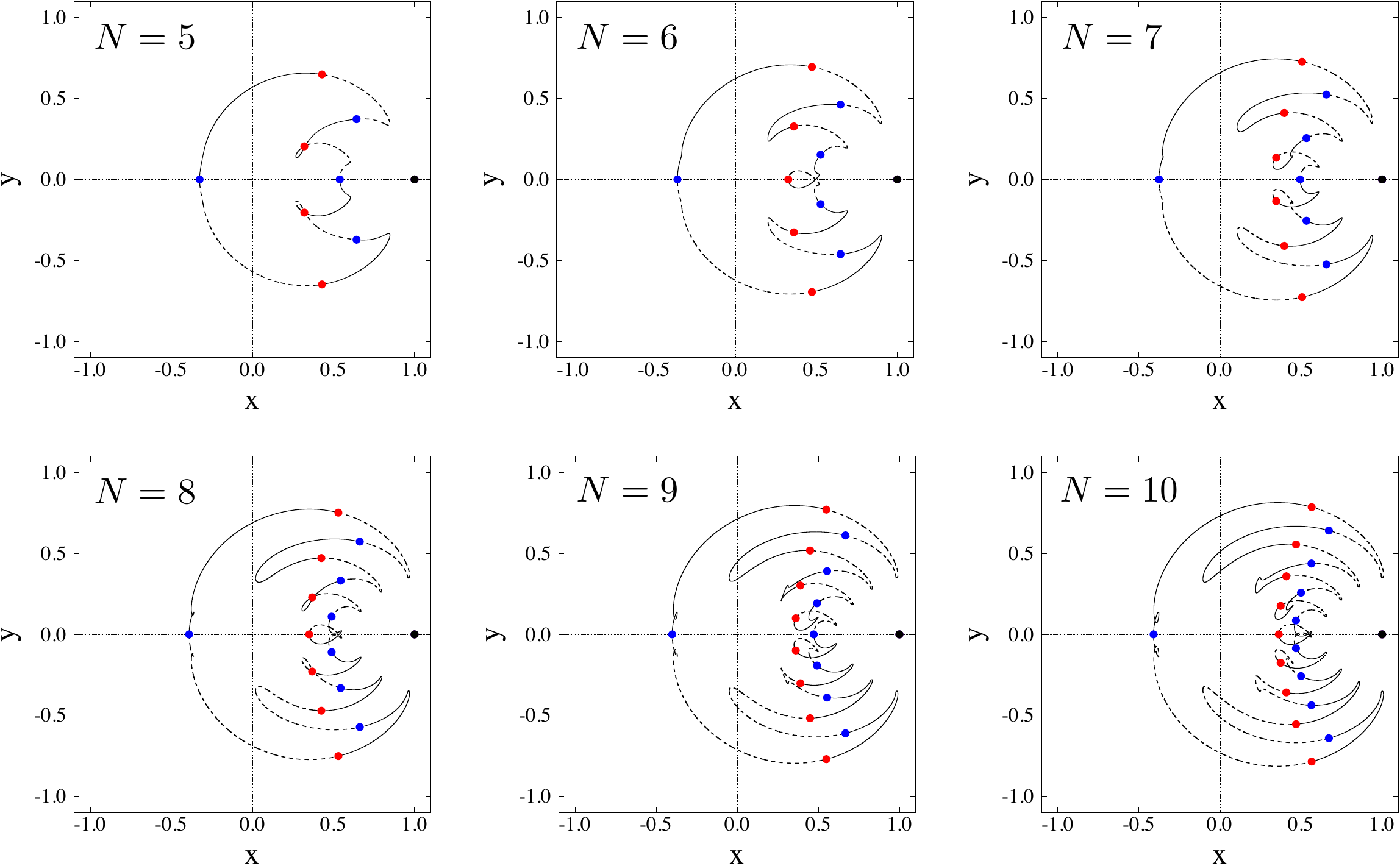}
\caption{The collapsing and expanding families for $N = 5,\cdots,10$. The solid curves denote the collapsing families, which are the same as Figure~\ref{fig:config-fiv-ten}, and the dashed ones do the expanding families.}
\label{fig:config-fiv-ten-closed}
\end{center}
\end{figure}

Next, we see the case of $N \geq 5$. Figure~\ref{fig:H-A-fiv-ten} shows the $\mathscr{H}$-$A$ curves for $N \geq 5$. In the numerical method, which we used for finding $\mathscr{H}$-$A$ curves, we just change the value of $\mathscr{H}$ forward or backward from a configuration obtained beforehand. Unlike the cases of the examples in Section~\ref{sec:exact-solution}, for each $\mathscr{H}$-$A$ curve in Figure~\ref{fig:H-A-fiv-ten}, there exists a minimal value of $\mathscr{H}$ at which we have $A \neq 0$ and the tangent of the $\mathscr{H}$-$A$ curve is parallel to the $A$-axis. We denote by $(\mathscr{H}_c, A_c)$ that point for convenience. Then, perturbing the point $(\mathscr{H}_c, A_c)$ slightly, we find another $\mathscr{H}$-$A$ curve connected to $(\mathscr{H}_c, A_c)$. This fact implies that $A$ is not determined as a single-valued function of $\mathscr{H}$. On the other hand, similarly to the examples in Section~\ref{sec:exact-solution}, we observe that configurations at the end points of $\mathscr{H}$-$A$ curves are in relative equilibria. The collapsing families corresponding to the $\mathscr{H}$-$A$ curves in Figure~\ref{fig:H-A-fiv-ten} are shown in Figure~\ref{fig:config-fiv-ten}. As we see in those figures, for each number of $N$, two configurations in relative equilibria are symmetric with respect to the $x$-axis and, as the number of $N$ increases, their point vortices seem to form a sheet except for one or two vortices. Unlike the four PV system, it is hard to prove the existence of the relative equilibria since we have to solve two or more nonlinear equations given by \eqref{v1_km}. Figure~\ref{fig:config-fiv-ten-closed} shows the collapsing families in Figure~\ref{fig:config-fiv-ten} and their reflections over the $x$-axis, that is, the expanding families. Similarly to the case of $N = 4$, the collapsing and expanding families are connected at relative equilibria and form a closed curve on the $x$-$y$ plane.

We remark that the $\mathscr{H}$-$A$ curves shown in Figure~\ref{fig:H-A-fiv-ten} are one of the examples for each $N=5, \cdots, 10$ and there is a possibility of the existence of other curves. Indeed, we have found another curve for $N = 10$, which is denoted by $C_2$ in Figure~\ref{fig:H-A-ten}$(a)$, and it gives different relative equilibria from those shown in Figure~\ref{fig:config-fiv-ten}, which is denoted by $C_1$. Figure~\ref{fig:H-A-ten}$(b)$ shows the collapsing family and the family with the expanding one that correspond to $C_2$. The figures imply that the curve formed by the collapsing and expanding families is not necessarily closed.


\begin{figure}[t]
\begin{center}
\includegraphics[scale=0.6]{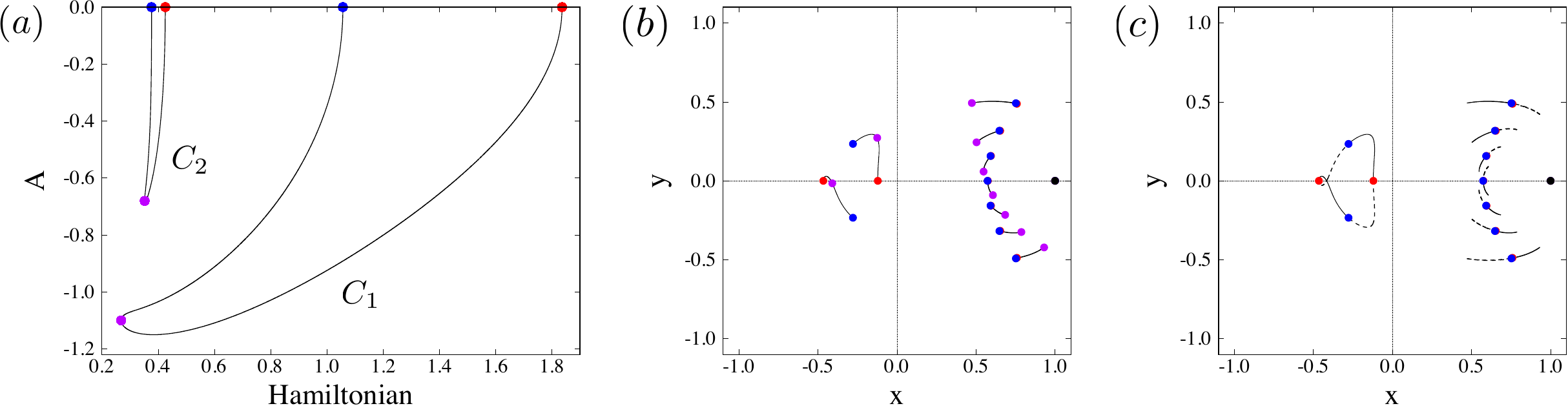}
\caption{$(a)$ The $\mathscr{H}$-$A$ curves for $N = 10$. $(b)$ The collapsing family and $(c)$ the collapsing and expanding families that correspond to $C_2$. The black points denote the fixed $k_{10}$. Note that the red and blue points with $x > 0$ are located near each other. The red or blue points correspond to relative equilibria and the purple points correspond to $(\mathscr{H}_c, A_c)$.}
\label{fig:H-A-ten}
\end{center}
\end{figure}

\subsection{Motions of point vortices with non-uniform strengths}
\label{sec:general-strength}

\begin{figure}[t]
\begin{center}
\includegraphics[scale=0.6]{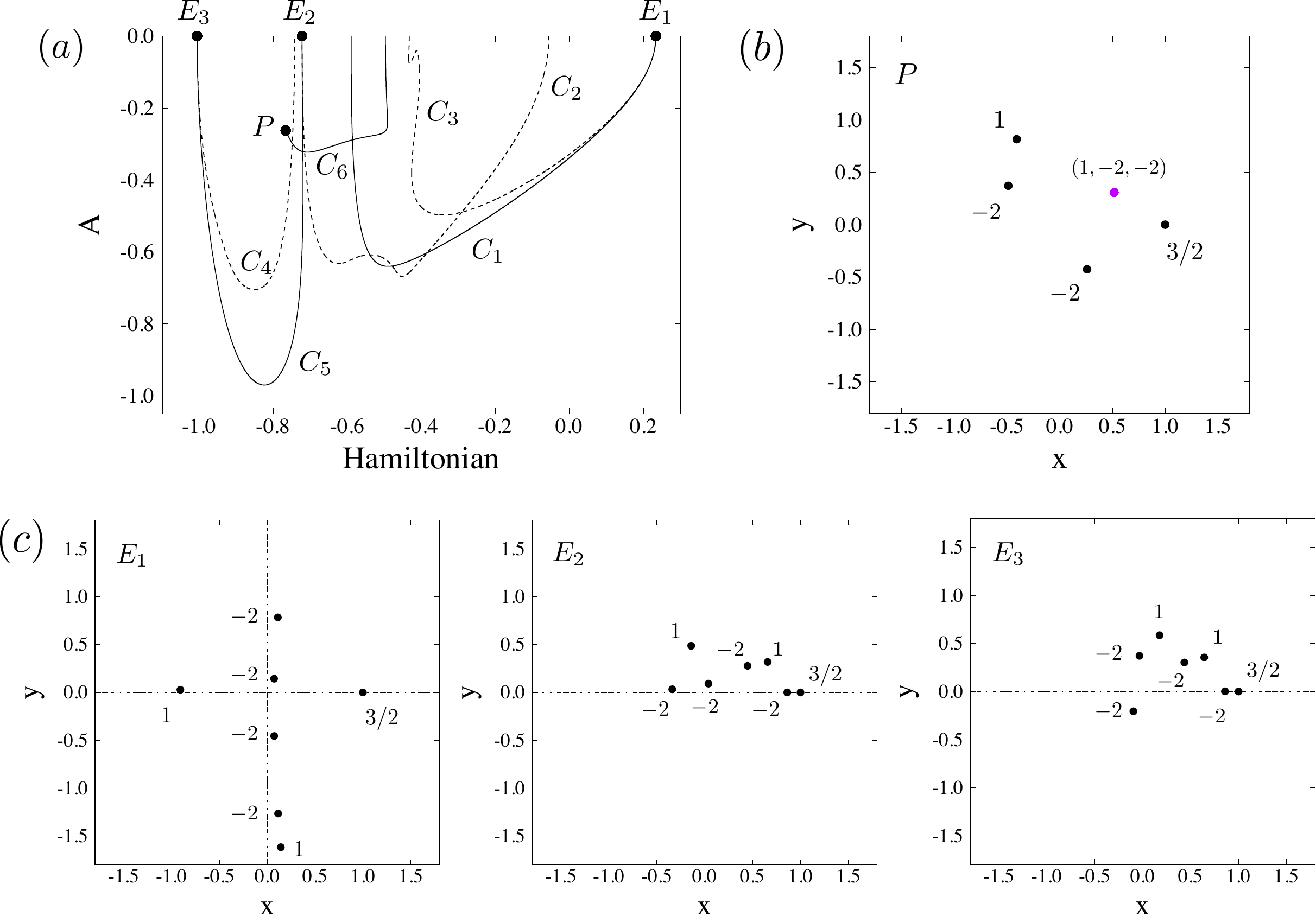}
\caption{$(a)$ The $\mathscr{H}$-$A$ curves for the seven point vortices with \eqref{g:seven-general}. $(b)$ The configuration at the point $P$ in the figure $(a)$. The purple point denotes the position at which three point vortices are located. $(c)$ The configurations in relative equilibria corresponding to $E_1$, $E_2$ and $E_3$ in the figure $(a)$.}
\label{fig:H-A-seven-general}
\end{center}
\end{figure}

We investigate self-similar collapse for non-uniform vortex strengths. For a number of $N$, examples of non-uniform vortex strengths, for which there exists a configuration leading to self-similar collapse, have been obtained by using numerical computations \cite{Kudela-1,Kudela-2,Oneil-2}. In particular, we consider the seven point vortices whose strengths are given by
\begin{equation}
\Gamma_1 = \Gamma_2 = 1, \qquad  \Gamma_3 = \Gamma_4 = \Gamma_5 = \Gamma_6 = -2, \qquad \Gamma_7 = \frac{3}{2}. \label{g:seven-general}
\end{equation}
Note that this example is given in \cite{Kudela-1}. Since point vortices that have same strengths are not distinguished, we fix the unique point vortex, that is, the seventh one with $\Gamma_7 = 3/2$ to the point $(1,0)$ for identifying equivalent configurations. Figure~\ref{fig:H-A-seven-general}$(a)$ shows the $\mathscr{H}$-$A$ curves for \eqref{g:seven-general} and there exist several $\mathscr{H}$-$A$ curves, which are not necessarily all of them. Some of curves are connected to the same points, which are denoted by $E_1$, $E_2$ and $E_3$ in the figure, on the $\mathscr{H}$-axis. At the point $E_1$, for example, the configurations in relative equilibria for $C_1$ and $C_3$ are different on the $x$-$y$ plane, see Figure~\ref{fig:config-seven-general}. However, applying a self-similar transformation or a reflection over the $x$-axis to each configurations, we find that two configurations are identical to the configuration shown in Figure~\ref{fig:H-A-seven-general}$(c)$-$E_1$. The same argument is valid for the other relative equilibria at $E_2$ and $E_3$, see Figure~\ref{fig:H-A-seven-general}$(c)$ and Figure~\ref{fig:config-seven-general}. Another remarkable feature of the example \eqref{g:seven-general} is the $\mathscr{H}$-$A$ curve labeled by $C_6$. One of the end points of $C_6$, denoted by $P$, does not satisfy $A = 0$ and the corresponding configuration, which is shown in Figure~\ref{fig:H-A-seven-general}$(b)$, is not in a relative equilibrium. We find from the figure that three point vortices are located at the same point and, in this situation, we can not find another $\mathscr{H}$-$A$ curve starting from the point $P$ by our numerical method. See Figure~\ref{fig:config-seven-general} for the entire collapsing family for $C_6$. From the above considerations, it is indicated that there are a greater number of collapsing families for non-uniform vortex strengths than uniform vortex strengths and they do not necessarily connected to relative equilibria.

\begin{figure}[t]
\begin{center}
\includegraphics[scale=0.6]{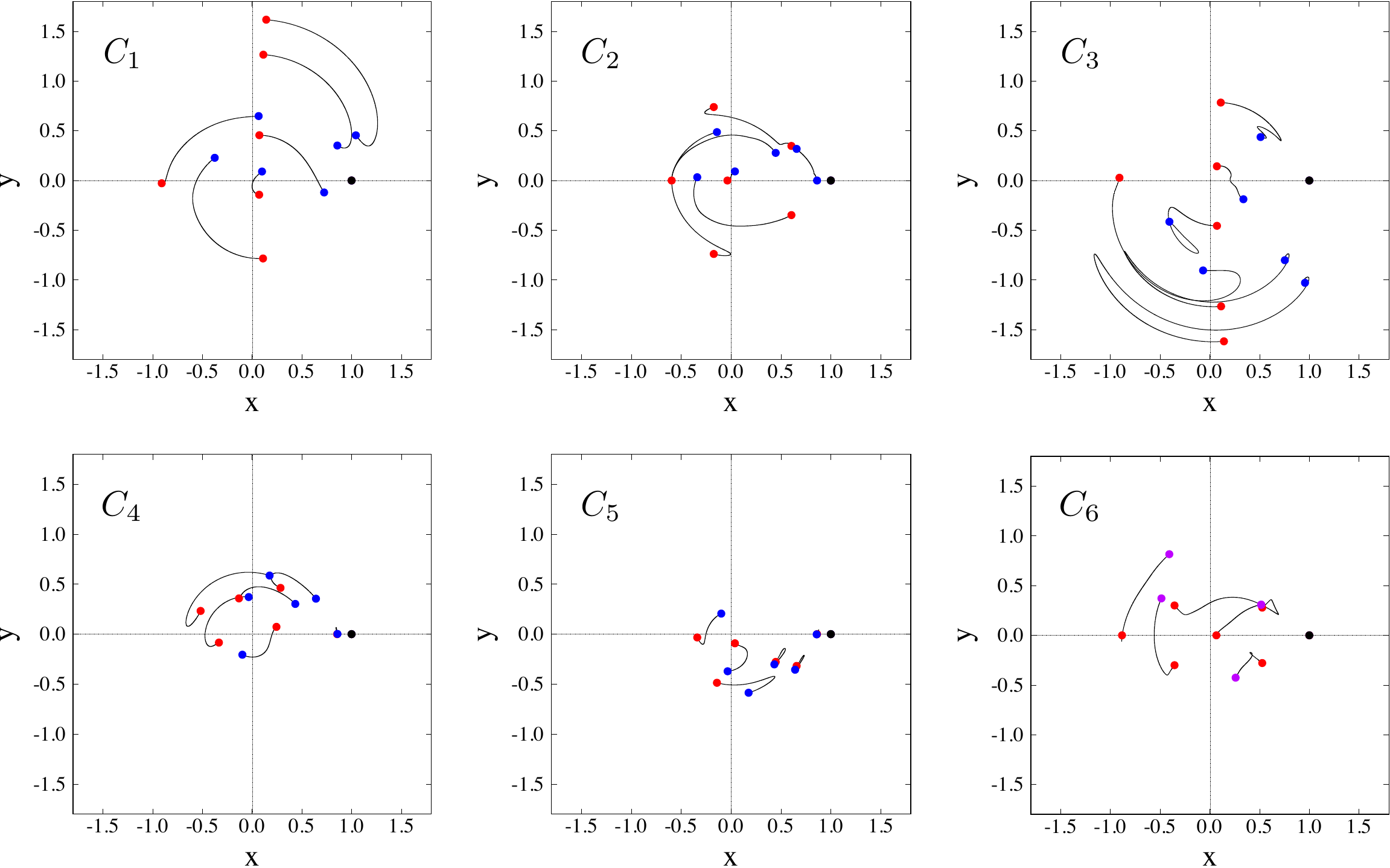}
\caption{The collapsing families corresponding to the $\mathscr{H}$-$A$ curves labeled by $C_1, \cdots C_6$ in Figure~\ref{fig:H-A-seven-general}$(a)$. The black points are $k_7$ and the configurations formed by red or blue points with $k_7$ are in relative equilibria. On each  $\mathscr{H}$-$A$ curve, the relative equilibrium for the greater value of $\mathscr{H}$ is represented by the red points and that for the smaller value of $\mathscr{H}$ corresponds to the blue one. The purple points in $C_6$ correspond to the point $P$.}
\label{fig:config-seven-general}
\end{center}
\end{figure}

\section{Concluding remarks}
\label{sec:concluding}

We have investigated self-similar collapsing solutions of the PV system. Introducing the collapsing families, which consist of configurations leading to self-similar collapse, we have shown that these families continuously depend on the Hamiltonian. Indeed, the collapse time and the Hamiltonian for the well-known exact collapsing solutions can be expressed by one-parameter functions, and the configurations at the boundary points of the parameter are in relative equilibria. For the PV problem with $N \geq 4$, we have studied the collapsing families with the help of numerical computations. Considering the case that $N - 1$ point vortices have a uniform vortex strength, we have confirmed that the collapsing families continuously depend on the Hamiltonian and their configurations asymptotically approach relative equilibria, which are symmetric with respect to the $x$-axis, as the Hamiltonian gets close to certain values. In particular, we have made the configurations in relative equilibria clear for $N = 4$. For the case of $N \geq 5$, we have shown that, unlike the examples of exact collapsing solutions, the configurations leading to self-similar collapse and the Hamiltonian are not in one-to-one correspondence in general, and there exist several collapsing families. For non-uniform vortex strengths, through an example for $N = 7$, it has been indicated that the collapsing families have a variety of structures and their configurations do not necessarily approach relative equilibria.

We discuss some future works suggested by the present study. It is worthwhile investigating whether there exist a $\mathscr{H}$-$A$ curve starting from any given relative equilibrium. Conversely, it is also of interest that any collapsing or expanding family is connected to configurations in relative equilibria or configurations for which some point vortices are located at the same point. Another interest is the number of continuous collapsing families. We have confirmed that the three PV system has a unique collapsing family, which is given by \eqref{three-init-general}, but it is unclear for the case of $N \geq 4$. Although we have numerically found some examples that have several collapsing families, further numerical or mathematical analysis is required. Moreover, as inspired by Section~\ref{sec:uniform-strength}, it is a challenging attempt to find equilibria of vortex sheets numerically by use of the point-vortex approximation, but some improvements are needed for the numerical method we used in this study.

\subsection*{Acknowledgements}
The author would like to thank Professor Yoshifumi Kimura and Professor Takashi Sakajo for productive discussions and valuable comments. This work was supported by JSPS KAKENHI Grant Number JP19J00064.


\end{document}